\documentclass{wscpaperproc}
\usepackage{latexsym}
\usepackage{graphicx}
\usepackage{mathptmx}
\usepackage[T1]{fontenc}
\usepackage{mwe}

\usepackage{amsmath}
\usepackage{amsfonts}
\usepackage{amssymb}
\usepackage{amsbsy}
\usepackage{amsthm}
\usepackage[ruled,vlined]{algorithm2e}

\usepackage[pdftex,colorlinks=true,urlcolor=blue,citecolor=black,anchorcolor=black,linkcolor=black]{hyperref}

\newtheoremstyle{wsc}
{3pt}
{3pt}
{}
{}
{\bf}
{}
{.5em}
{}

\theoremstyle{wsc}
\newtheorem{theorem}{Theorem}

\begin{document}

%
%

\pagestyle{fancyplain}

\thispagestyle{plain}
\firstPageHead{}

\chead{\fancyplain{}{\itshape Aldridge}}

\rhead{}
\cfoot{}
\renewcommand{\headrulewidth}{0pt} 

\makeatletter
\let\@internalcite\cite
\def\cite{\def\@citeseppen{-1000}%
    \def\@cite##1##2{(##1\if@tempswa , ##2\fi)}%
    \def\citeauthoryear##1##2##3{##1 ##3}\@internalcite}
\def\citeNP{\def\@citeseppen{-1000}%
    \def\@cite##1##2{##1\if@tempswa , ##2\fi}%
    \def\citeauthoryear##1##2##3{##1 ##3}\@internalcite}
\def\citeN{\def\@citeseppen{-1000}%
    \def\@cite##1##2{##1\if@tempswa, ##2)\else{}\fi}%
    \def\citeauthoryear##1##2##3{##1 (##3)}\@citedata}
\def\citeA{\def\@citeseppen{-1000}%
    \def\@cite##1##2{(##1\if@tempswa , ##2\fi)}%
    \def\citeauthoryear##1##2##3{##1}\@internalcite}
\def\citeANP{\def\@citeseppen{-1000}%
    \def\@cite##1##2{##1\if@tempswa , ##2\fi}%
    \def\citeauthoryear##1##2##3{##1}\@internalcite}
\def\shortcite{\def\@citeseppen{-1000}%
    \def\@cite##1##2{(##1\if@tempswa , ##2\fi)}%
    \def\citeauthoryear##1##2##3{##2 ##3}\@internalcite}
\def\shortciteNP{\def\@citeseppen{-1000}%
    \def\@cite##1##2{##1\if@tempswa , ##2\fi}%
    \def\citeauthoryear##1##2##3{##2 ##3}\@internalcite}
\def\shortciteN{\def\@citeseppen{-1000}%
    \def\@cite##1##2{##1\if@tempswa, ##2\else{}\fi}%
    \def\citeauthoryear##1##2##3{##2 (##3)}\@citedata}
\def\shortciteA{\def\@citeseppen{-1000}%
    \def\@cite##1##2{(##1\if@tempswa , ##2\fi)}%
    \def\citeauthoryear##1##2##3{##2}\@internalcite}
\def\shortciteANP{\def\@citeseppen{-1000}%
    \def\@cite##1##2{##1\if@tempswa , ##2\fi}%
    \def\citeauthoryear##1##2##3{##2}\@internalcite}
\def\citeyear{\def\@citeseppen{-1000}%
    \def\@cite##1##2{(##1\if@tempswa , ##2\fi)}%
    \def\citeauthoryear##1##2##3{##3}\@citedata}
\def\citeyearNP{\def\@citeseppen{-1000}%
    \def\@cite##1##2{##1\if@tempswa , ##2\fi}%
    \def\citeauthoryear##1##2##3{##3}\@citedata}
%
%
%
\def\@citedata{%
    \@ifnextchar [{\@tempswatrue\@citedatax}%
                  {\@tempswafalse\@citedatax[]}%
}

\def\@citedatax[#1]#2{%
\if@filesw\immediate\write\@auxout{\string\citation{#2}}\fi%
  \def\@citea{}\@cite{\@for\@citeb:=#2\do%
    {\@citea\def\@citea{, }\@ifundefined
       {b@\@citeb}{{\bf ?}%
       \@warning{Citation `\@citeb' on page \thepage \space undefined}}%
{\csname b@\@citeb\endcsname}}}{#1}}%

%
\def\@citex[#1]#2{%
\if@filesw\immediate\write\@auxout{\string\citation{#2}}\fi%
  \def\@citea{}\@cite{\@for\@citeb:=#2\do%
    {\@citea\def\@citea{; }\@ifundefined
       {b@\@citeb}{{\bf ?}%
       \@warning{Citation `\@citeb' on page \thepage \space undefined}}%
{\csname b@\@citeb\endcsname}}}{#1}}%

%
\def\@biblabel#1{}
\makeatother



\newdimen\bibindent
\bibindent=0.0em
\def\thebibliography#1{\section*{\refname}\list
   {}{\settowidth\labelwidth{[#1]}
   \leftmargin\parindent
   \itemindent -\parindent
   \listparindent \itemindent
   \itemsep 0pt
   \parsep 0pt}
   \def\newblock{}
   \sloppy
   \sfcode`\.=1000\relax}


\setlength{\baselineskip}{12.7pt}

\title{FAST MONTE CARLO}

\author{\begin{center}Irene Aldridge\textsuperscript{1}\\
[11pt]
\textsuperscript{1}Cornell University, ORIE, Financial Engineering, Ithaca, NY, USA
\end{center}}

\maketitle

\vspace{-12pt}

\section*{ABSTRACT}
This paper proposes an eigenvalue-based small-sample approximation of the celebrated Markov Chain Monte Carlo that delivers an invariant steady-state distribution that is consistent with traditional Monte Carlo methods. The proposed eigenvalue-based methodology reduces the number of paths required for Monte Carlo from as many as 1,000,000 to as few as 10 (depending on the simulation time horizon $T$), and delivers comparable, distributionally robust results, as measured by the Wasserstein distance. The proposed methodology also produces a significant variance reduction in the steady-state distribution.

\section{INTRODUCTION}

Monte Carlo simulation is an intuitive computational methodology that estimates the distribution of a given variable. Monte Carlo is flexible and non-parametric, repeatedly sampling various possible evolution paths of the variable instead of imposing rigid modeling specifications. Monte Carlo is ubiquitous in applications, such as asset pricing: see, for example, \shortcite{BOYLE19971267} and \shortcite{Glasserman2003}, forecasting: \shortcite{Leith1974_TheoreticalSkillofMonteCarloForecasts}, and, most recently, machine learning optimization: \shortcite{PainterEtAl2023} and \shortcite{henderson2020variancereductionsimulationmulticlass}, among many others. The application of Monte Carlo proved to be particularly effective in multi-agent problems, such as multi-period games: \shortcite{cheng:input94} and \shortcite{macfarlane2024sposequentialmontecarlo}). Most recently, \shortcite{AgrawalEtAl2024} use Monte-Carlo to estimate Efficient Influence Functions (EIF) that determine causal relationships among different variables. \shortcite{abbassi2015optimizing} applied Monte Carlo to advertising optimization. Monte Carlo is even used to assess the structural stability of buildings (see \shortcite{SONG2023103479} for a comprehensive survey). 

One of the drawbacks of Monte Carlo algorithms is that they take a long time to compute. For example, \shortcite{AndersonHighamSun2018} identify computational complexities for different versions of Monte Carlo implementations ranging from $\Theta(N^{2\alpha} + N)$ to $\Theta(N^{2\alpha-1}(log\: N)^2+N)$. Different mitigation strategies have been proposed to speed up the computation time, including using advanced hardware by \shortcite{Rosenthal2000} and machine-learning-aided estimation by \shortcite{LamZhang2023} and \shortcite{liCapponi2025predictionenhancedmontecarlomachine}. 

In this paper, I propose to dramatically speed up the computation of Monte-Carlo using the Perron-Frobenius theorem. Extending the classic Markov Chain Monte Carlo (MCMC) due to \shortcite{Metropolis1953} and \shortcite{Hastings1970}, the proposed algorithm:
\begin{enumerate}
    \item incorporates intermediate simulation steps, discarded in classic Monte-Carlo approaches, into the Markov Chain. This part of the process can be calculated in $\mathcal{O}(N)$.
    \item applies Perron-Frobenius theorem to find the steady-state distribution of the Markov Chain, that also happens to be the terminal Monte-Carlo distribution. This part of the process can be implemented in constant time $\mathcal{O}(1)$
    \item the complete process runs in $\mathcal{O}(N)$, significantly faster than comparable alternatives.  
\end{enumerate} 

In the classic Markov Chain Monte Carlo (MCMC) framework developed by \shortcite{VonNeumann1945}, \shortcite{Metropolis1953} and \shortcite{Hastings1970}, the Markov chain records transitions from the beginning state to the terminal state of each simulation path: $S_0$ to $S_T$. The classic MCMC ignores intermediate steps or subpaths from $S_t$ to $S_{t+1}$ for $0<t<T-1$.  However, the intermediate subpaths are all drawn from the same distribution and, therefore, are valid paths under each Monte Carlo specification. Further, each subpath forms a path toward the terminal state destination. I propose to include the subpaths in the Markov matrix to dramatically increase the number of observations in the Markov Chain. The computational complexity of the resulting process is $\mathcal{O}(N)$, directly proportional to $N$, the number of paths chosen for the operation. This methodology further develops the Stein approach extended by \shortcite{LamZhang2023}. 

Next, traditional MCMC approaches rely on computationally complex algorithms to find the steady-state terminal distribution of the simulation.  I propose to use the Perron-Frobenius theorem to quickly estimate the steady-state distribution as the first eigenvector of the Markov Chain. With recent advances in matrix multiplication, the first eigenvector can be computed in $\mathcal{O}(1)$. 

Taken together, the extended Markov chain and the Perron-Frobenius estimation of the steady state result in $\mathcal{O}(N)$ complexity of the Monte Carlo calculation. In addition to the theoretical derivation, I provide simulation results that vividly illustrate the process. 

The proposed eigenvalue framework leverages low-rank approximations. Related approaches include \shortcite{kozyrskiy2020cnnaccelerationlowrankapproximation} and \shortcite{hu2021loralowrankadaptationlarge}, who use low-rank decomposition to increase the computational speed of Convolutional Neural Networks (CNNs) and Large Language Models (LLMs), respectively.

This work falls under the umbrella of distributionally robust optimization (DRO). The field of DRO was pioneered by \shortcite{Scarf1958}. This paper further advances ideas proposed by \shortcite{CalafioreElGhaoui2006}, \shortcite{BertsimasJohnsonKallus2015} and \shortcite{BertsimasGuptaKallus2018} who emphasized the importance of optimization over randomization. Specifically, \shortcite{BertsimasGuptaKallus2018} showed that small sample inferences can be robustly estimated using \textit{some} statistical test. We follow this approach to show that the proposed Monte Carlo methodology also delivers statistically robust inferences.  

I measure convergence using Wasserstein distance, which is identical to the MC convergence metric of \shortcite{bellin1994eulerian}. Furthermore, traditional Monte Carlo methods converge in $\mathcal{O}(\sqrt{N})$. We show that our methodology converges in $\mathcal{O}(1)$.

Last but not least, the proposed approach significantly reduces the variance of Monte-Carlo result. \shortcite{liCapponi2025predictionenhancedmontecarlomachine} document the importance of variance reduction in real-life applications. The proposed methodology delivers both accurate estimates and a considerably smaller variance of the result.

\section{Related Literature}
\subsection{Monte-Carlo}

The original idea of the Monte-Carlo framework was developed by \shortcite{VonNeumann1945}, \shortcite{metropolis_ulam_1949} and \shortcite{Metropolis1953} as part of the Manhattan project. Monte-Carlo referred to the mathematical modeling of random movements of particles, and was named after the popular gambling destination of that era. Subsequently, Monte-Carlo became a go-to tool for integration and other computational problems popularized by \shortcite{Bauer1958}, \shortcite{HammersleyMorton1954}. Manufacturing, production and scheduling applications soon followed in \shortcite{YouleEtAl1959}, \shortcite{Jessop1956}, \shortcite{CraneEtAl1955}, \shortcite{Miller1961}, \shortcite{Blumstein1957}, \shortcite{Rich1955} and surveyed by \shortcite{Shubik1960}.  \shortcite{Johnson2013} notes that Monte-Carlo rapidly expanded to social sciences as well, including \shortcite{McPheeSmith1962Voting} and \shortcite{Barnett1962}. Since then, Monte-Carlo has been widely used in finding distributionally-stable statistical estimators, like in \shortcite{DAgostinoRosman1974} and many others. Monte-Carlo found a particularly enthusiastic following in Finance: \shortcite{PALMER1994264}, \shortcite{Johnson2002} and \shortcite{LinnTay2007}. 
 
The most recent applications of Monte Carlo fall into the following three, equally important, classes:
\begin{itemize}
    \item Diffusion
    \item Agent-based modeling
    \item Reinforcement Learning
\end{itemize}

\subsubsection{Monte-Carlo in Diffusion}

Monte Carlo diffusion extends the standard Brownian motion to find the distribution of future values. For example, in Finance, Monte Carlo diffusion is often used to price options by first simulating the distribution of the underlying variable (e.g., a stock) and then estimating the option payout for each outcome and taking the present value of the payouts to find the fair option price. A sample stock price diffusion model and the option payout calculation shown in equations 

\begin{equation}
    dS_t = \mu S_t dt + \sigma S_t dW_t
\end{equation}
\begin{equation}
    P_{C}(S_t) = max(S_t-K,0)
\end{equation}
\begin{equation}
    C_t = E_{\theta}[P_c(S_t)]
\end{equation}
where $E_{\theta}$ is the expected value under the chosen probability measure.

\shortcite{Giles2008} proposed Euler discretization which he called MultiLevel Monte Carlo (MLMC). The idea of MLMC is to cut down on computation by generating a few high-accuracy approximations and many low-accuracy approximations and sampling repeatedly from the combined pool of paths. The fastest variation of MLMC is the unbiased MLMC tau-leaping algorithm developed by \shortcite{AndersonHigham2012}.

\subsubsection{Monte-Carlo in Agent-based Models}

Monte Carlo has been used in agent-based models, where simulation is used to determine the likely outcomes of the actions of loosely connected autonomous agents.  This methodology estimates the joint likelihood function of many potentially heterogeneous agents with possibly unobserved distributions.

Sequential Monte Carlo (SMC) has been deployed to estimate the densities $p(x_t|y_t,\theta)$ of unknown variables $x_t,\;t=1,...,T$ given the observed variables $y_t$, $t=1,...,T$, and parameters $\theta$. SMC provides a point-wise approximation of density $p(x_t|y_t,\theta)$ where the closed-form solution may not be available. Due to their point-wise nature, SMC is often referred to as a particle filter method and is popular in the estimation of the joint likelihood function. The seminal works in this area include \shortcite{GordonEtAl1993} and \shortcite{kitagawa1996monte}. Since estimation can be based on previously-observed variables, the SMC methodology is also considered to be Bayesian.  

In a system with $B$ agents or particles, a random collection of $j=1,...,B$ particles is sampled from the stationary distribution of the hidden variable $x_t$. The densities $P(y_t|x_t^{(j)})$ are computed. Next. particles are resampled using normalized weights $P(y_t|x_t^{(j)})/\sum_{j=1}^B P(y_t|x_t^{(j)})$. A one-period simulation is conducted on each particle to bring it to $t+1$ realization. The process is repeated for $t=1,...,T$.  Recent applications of SMC include \shortcite{amisano2010euro}, \shortcite{fernandez2007estimating}, \shortcite{bao2012particle}, \shortcite{pitt2014simulated}, \shortcite{blevins2016sequential} and \shortcite{gallant2018bayesian}. 

The SMC process is computationally-intensive, requiring $\mathcal{O}(B\times T\times  (\text{rounds of optimization}))$. The methodology proposed in this paper significantly reduces the computational complexity of the SMC method by reducing the number of optimization rounds required.

\subsubsection{Monte-Carlo in Reinforcement Learning}

A reinforcement learning algorithm can be represented as a sequential decision process. To do so, \shortcite{macfarlane2024sposequentialmontecarlo} use the Markov Decision Process (MDP) framework, a standard in Q-learning.  MDP can be described as a tuple $(\mathcal{S}, \mathcal{A},\mathcal{T}, r, \gamma, \mu)$, where, according to \shortcite{macfarlane2024sposequentialmontecarlo}:
\begin{itemize}
    \item $\mathcal{S}$ is the set of all possible states
    \item $\mathcal{A}$ is the set of actions an agent can take
    \item $\mathcal{T}$ is the state transition probability function, $\mathcal{T}: \mathcal{S}\times\mathcal{A}\to\mathcal{P}(\mathcal{S})$
    \item $r$ is the reward function: $r: \mathcal{S}\times\mathcal{A}\to\mathcal{R}$
    \item $\gamma$ is the discount factor, $\gamma\in[0,1]$
    \item $\mu$ is the initial state distribution. 
\end{itemize}

 The sequential decision process also incorporates a set of agent actions following a policy $\pi: \mathcal{S}\to\mathcal{P}(\mathcal{A})$. At each step $t$, an agent:
 \begin{enumerate}
     \item chooses action $a_t\in\mathcal{A}$
     \item ends up in state $s_t$
     \item receives a reward $r_t = r(s_t, a_t)$
 \end{enumerate}

 At time $t=0$, the present value of the agent's reward from step $t$ is $\gamma^t r_t$.  Over the lifetime of the process, the present value of the cumulative rewards at time $t=0$ is then $\sum_{t=0}^{\infty}\gamma^t r_t$. The agent seeks an optimal policy $\pi^*$ to maximize this amount:
 
 \begin{equation}
     \pi^*\in arg \:max_{\pi\in\Pi} E_\pi \left[\sum_{t=0}^{\infty}\gamma^t r_t\right]
 \end{equation}
where $\Pi$ is a set of all realizable policies. 

We also define the value function:
\begin{equation}
    V^\pi(s_t) = E_\pi \left[\sum_{t=t}^{\infty}\gamma^t r_t|s_t\right]
\end{equation}
which is the present value of future rewards when the agent starts out from the state $s_t$ and follows the policy $\pi$.

The state-action value function maps a state $s_t$ to the present value of future rewards when first choosing action $a_t$ and then following policy $\pi$:
\begin{equation}
    Q^\pi(s_t, a_t) = E_\pi \left[\sum_{t=t}^{\infty}\gamma^t r_t|s_t, a_t\right]
\end{equation}

\shortcite{ToussaintStorkey2006}, \shortcite{KappenEtAl2012} and others formulate the distribution over possible paths $\tau=(s_0,a_0, s_1,a_1,...,s_T,a_T)$ for horizon $T <\infty$:
\begin{equation}
    p(\tau)=\mu(s_0)\Pi_{t=0}^Tp(a_t)\mathcal{T}(s_{t+1}|s_t,a_t)
\end{equation}
where $\mathcal{T}$ is the transition probability matrix defined above and
\begin{itemize}
    \item $\mu_0$ is the initial state distribution 
    \item $a_t$ is the immediately preceding action 
\end{itemize}

The Monte Carlo process is used to estimate the optimal policy $\pi^*$. This is accomplished by:
\begin{enumerate}
    \item Sampling possible paths from \textbf{$\tau$}
    \item Simulating terminal outcomes
    \item Computing present values of the outcomes under different specifications (such as probabilities of individual states)
    \item Finally, choosing the path with the highest present value.
\end{enumerate}

\shortcite{AgrawalEtAl2024} use Monte-Carlo reinforcement learning to estimate Efficient Influence Functions (EIF) that determine causal relationships among different variables.

\section{PROPOSED METHODOLOGY}

Following \shortcite{ToussaintStorkey2006} and \shortcite{KappenEtAl2012}, we define a path $\mathcal{\tau}$ as a sequence of state transitions under the Monte-Carlo simulation from the beginning state to the terminal state:
\begin{equation}
    \tau=(s_0, s_1, ,...,s_T)
\end{equation}
In the methodology discussed in this section, we explicitly omit the action space required in reinforcement learning, however, it can be inserted without sacrificing the results.

\subsection{Markov Chain Construction}

The eigenvalue-decomposition approach for Monte Carlo presented in this paper further optimizes any Monte Carlo methodology to reduce the required number of paths. Our approach is related to the Metropolis-Hastings formulation (\shortcite{MetropolisEtAl1953}, \shortcite{Hastings1970}), but also relies on the Perron-Frobenius theorem (\shortcite{Frobenius1912}).  

As with \shortcite{ToussaintStorkey2006} and others, each simulation randomly samples a path from a set of all possible paths. Then, following Metropolis-Hastings Markov Chain Monte Carlo (MCMC), using the sample paths, we construct a transition probability matrix. 

The proposed Markov Chain construction is different from the Metropolis-Hastings MCMC. The Metropolis-Hastings approach only considers transitions from the first step to the respective terminal step, $s_0 \to s_T$, effectively discarding all the simulation steps in between. The approach proposed in this paper records every transition step in the Markov Matrix: $s_0\to s_1, s_1\to s_2,\dots, s_{t-1}\to s_t\;\forall t\in\{1,\dots,T\}$. Although the recorded transitions are much smaller than the start-to-end transitions under Metropolis-Hastings MCMC, when considered in the steady-state framework, both the proposed transitions and Metropolis-Hastings MCMC converge to the same terminal distribution. This result follows directly from the principles of Markov Chain construction. By construction, the proposed Eigen Monte Carlo approach produces a much larger sample of transitions than does Metropolis-Hastings MCMC, requiring a lot fewer paths to draw sound steady-state conclusions. 

Our proposed framework, therefore, increases the number of observations in the Markov Chain without increasing the number of simulation paths. 

\subsection{Computational Improvement of Steady-State Calculations}

Next, we further improve the computation speed of the steady-state by deploying the Perron-Frobenius theorem: the first eigenvector of a Markov Chain transition probability matrix serves as the vector of the steady-state probabilities of the Markov Chain matrix. With Perron-Frobenius, we are able to deduce a steady-state distribution of the full Monte Carlo simulation with fewer steps. 

To compute the steady-state vector for the transition probability matrix $\mathcal{T}$:
\begin{enumerate}
    \item Find the largest eigenvalue $\lambda_{max}$ and the corresponding singular vector $v_0$ by solving $(A - I_n)v_0=0$
    \item Divide $v_0$ by the sum of entries $\sum_i v_0$ to obtain a normalized vector $w$ that sums up to 1.
    \item The vector $w$ will automatically have all positive values. This is the steady-state vector for the transition probability matrix $\mathcal{T}$. 
\end{enumerate}

It is easy to prove that the largest eigenvalue of a Markov matrix is always 1 (\shortcite{Seabrook_2023}). 

Our key result is to show that the Perron-Frobenius eigenvector $v_0$ corresponding to $\lambda=1$ in any Markov matrix is always invariant. That is, regardless of how many observations we have, the vector $v_0$ retains its distributional properties and represents steady-state distributions of the Markov matrix.  

\begin{theorem}
    The steady-state distribution of a Markov Chain Monte Carlo is invariant for any number of paths $k$ which satisfies a normal distribution.  
\end{theorem}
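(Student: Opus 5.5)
The plan is to read ``invariant'' as a statement about the Markov kernel that the simulation samples from. The claim to prove is that this kernel, and therefore its Perron--Frobenius vector, does not depend on the number of paths $k$ once every path is generated by the same normal increment law.

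The first step fixes that kernel. I will discretise the state space into bins $\{1,\dots,n\}$ and let each path $\tau^{(j)}=(s_0^{(j)},\dots,s_T^{(j)})$, $j=1,\dots,k$, evolve by i.i.d.\ normal increments, as in the diffusion (1). Every one-step transition $s_{t-1}\to s_t$ recorded in the construction of the Markov Chain Construction subsection is then governed by the same kernel
\begin{equation*}
\mathcal{T}_{ab}=\Pr\left(s_t\in b \,\middle|\, s_{t-1}\in a\right),
\end{equation*}
which is a Gaussian integral over bin $b$. This kernel is independent of $t$, of the path index $j$, and in particular of $k$. It is also the kernel implicit in the Metropolis--Hastings start-to-end transitions, because the $T$-step kernel is $\mathcal{T}^T$, so both constructions target the same object.

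The second step concerns the empirical matrix. Let $C_k=\sum_{j=1}^k C^{(j)}$ be the matrix of pooled transition counts, and let $\hat{\mathcal{T}}_k$ be $C_k$ with each row normalised to sum to one. Row normalisation removes the factor $k$, so $\hat{\mathcal{T}}_k$ is a matrix of conditional frequencies. For every $k$, each row of $C_k$ is, conditionally on its row total, multinomial with cell probabilities $\mathcal{T}_{a\cdot}$. Hence the conditional expectation of $\hat{\mathcal{T}}_k$ equals $\mathcal{T}$ for every $k$, including $k=1$ once the path visits enough states. The point is that $k$ only changes the sample size behind each row; it never changes the law from which the rows are drawn.

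The third step applies the Perron--Frobenius procedure of the Computational Improvement subsection to $\mathcal{T}$. The Gaussian kernel gives strictly positive mass to every bin, so $\mathcal{T}$ is a positive stochastic matrix. Perron--Frobenius then gives a simple eigenvalue $\lambda_{\max}=1$, consistent with the cited Seabrook result, with a unique positive normalised left eigenvector $w$ satisfying $w\mathcal{T}=w$. Since $w$ is a function of $\mathcal{T}$ alone, and $\mathcal{T}$ was shown in the first two steps not to depend on $k$, the steady state $w$ is the same for every $k$. This is the invariance claimed in the theorem. I would add the remark that the terminal Monte Carlo distribution $\mu\mathcal{T}^T$ converges to the same $w$ for any initial $\mu$.

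The main obstacle is the gap between $\mathcal{T}$ and the realised matrix $\hat{\mathcal{T}}_k$. The eigenvector map is nonlinear, so unbiasedness of $\hat{\mathcal{T}}_k$ does not make $\hat w_k$ unbiased. For small $k$, some rows of $\hat{\mathcal{T}}_k$ may also be empty or reducible. I would first try to make the argument rigorous by working at the level of the kernel, as above, and state explicitly that invariance refers to the target steady state. To connect this with the computed vector, I would then add a perturbation bound,
\begin{equation*}
\|\hat w_k-w\|\le \kappa(\mathcal{T})\,\|\hat{\mathcal{T}}_k-\mathcal{T}\|,
\end{equation*}
where $\kappa(\mathcal{T})$ is the condition number of the stationary distribution, for example the group-inverse bound. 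The right-hand side is controlled by the pooled count $kT$ of transitions rather than by $k$ alone, which is exactly the sense in which few paths suffice.
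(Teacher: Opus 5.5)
Your Steps 1 and 3 prove something true but empty. $w$ is a deterministic function of the population kernel $\mathcal{T}$, and $k$ never enters $\mathcal{T}$, so its independence of $k$ needs no argument. The object whose dependence on $k$ is actually in question is the vector $\hat w_k$ computed from $\hat{\mathcal{T}}_k$. That is what the algorithm returns, and it is what the paper's proof addresses when it writes the sampled matrix as $\mathcal{T}'=\mathcal{T}+\epsilon$ and argues that its eigenvector $v_0'$ is unchanged. For that object exact invariance is false. $w$ is strictly positive on every bin, whereas $\hat w_1$ lives only on the bins a single path happened to visit, and in general $\hat w_k$ is a random vector whose law changes with $k$. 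The paper reaches invariance only through the assertion that $\mathcal{T}v_0=v_0$ holds for any $v_0$, including $v_0'$, which would force $\mathcal{T}=I$; it offers no valid template. Your last paragraph in effect concedes the point. The bound $\|\hat w_k-w\|\le\kappa(\mathcal{T})\,\|\hat{\mathcal{T}}_k-\mathcal{T}\|$ is the correct replacement for the paper's invalid step, but it gives invariance only up to a sampling error that depends on $k$ and $T$. You should say explicitly that what you prove is this approximate statement plus the trivial $k$-independence of the target. Do not present the kernel-level fact as the invariance claimed in the theorem.

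The bridge toward that bound also has errors.

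\emph{Step 2.} The row total $N_a$ (departures from $a$ before time $T$) depends on where earlier departures from $a$ went, so conditioning on it distorts the row. Rows are not conditionally multinomial, and $\hat{\mathcal{T}}_k$ is not unbiased. Take two states, start in state $1$, $T=2$, $\mathcal{T}_{11}=p\in(0,1)$ and any second row. Then $\hat{\mathcal{T}}_{11}$ equals $1,\tfrac12,0$ with probabilities $p^2,\,p(1-p),\,1-p$, so $E[\hat{\mathcal{T}}_{11}]=p(1+p)/2\neq p$, and given $N_1=2$ the count $C_{11}$ is never $0$. What does hold is $E[C_{ab}]=\mathcal{T}_{ab}\,E[N_a]$, plus concentration of row $a$ at a rate set by its own visit count $N_a$, not by the pooled count $kT$. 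From a fixed $S_0$ over a short horizon, bins far from $S_0$ are visited rarely or never. So $\hat{\mathcal{T}}_k$ need not even be stochastic, and the perturbation bound does not apply until every row is populated. With drift, these can be exactly the bins where $w$ puts its mass.

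\emph{Step 1.} You need the simulated state to be rounded to the grid at every step. If a continuous path is binned after the fact, $\Pr(s_t\in b\mid s_{t-1}\in a)$ depends on where in bin $a$ the price sits, hence on $t$ and on the history, and there is no single kernel $\mathcal{T}$.

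Your use of the left eigenvector $w\mathcal{T}=w$ is correct; for a row-normalised matrix the paper's equation $\mathcal{T}v_0=v_0$ is solved by the constant vector.
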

\begin{proof}
This result follows directly from the eigenvalue definition applied to the transition probability matrix $\mathcal{T}$:
\begin{equation}
    \mathcal{T}v_0 = \lambda_{max} v_0
\end{equation}
Since $\lambda_{max} = 1$ for any Markov matrix, 
\begin{equation}\label{eq:MC1}
    \mathcal{T}v_0 = v_0
\end{equation}

Suppose now that the Markov transition probability matrix is perturbed:
\begin{equation}
    \mathcal{T'} = \mathcal{T} + \epsilon
\end{equation}
Then, $v_0'$ is the first eigenvector of $\mathcal{T'}$. Furthermore, since equation (\ref{eq:MC1}) holds for all Markov matrices:
\begin{equation}\label{eq:T'}
    (\mathcal{T}+\epsilon)v_0' = v_0'
\end{equation}

Equation (\ref{eq:T'}) can be further represented as:
\begin{equation}
    \mathcal{T}v_0'+\epsilon v_0' = v_0'
\end{equation}

Since, equation (\ref{eq:MC1}),  $\mathcal{T}v_0=v_0$, holds for any $v_0$, including $v_0'$, $\epsilon v_0'\to 0$.

The $v_0'$, therefore, remains invariant from the very few original paths taken by the Monte Carlo simulation.     
\end{proof}

Empirically, as discussed in the next section, I find that just a handful of the paths are enough to produce inferences comparable with full-fledged Monte Carlo.

The proposed algorithm works as follows:
\begin{algorithm}
  \caption{Fast Monte Carlo} \label{alg:my-algorithm}
  \SetKwInOut{Input}{inputs}
  \SetKwInOut{Output}{output}
  \SetKwProg{FastMonteCarlo}{FastMonteCarlo}{}{}

  \FastMonteCarlo{}{
    \Output{Output of a Monte Carlo Simulation}
    initialize a Markov chain $M(s_i,s_j) = 0$\;
    initialize a Monte-Carlo process\;
    \While{$i \geq N$}
    {  
        \ForEach {$\text{path}\; \tau_i=(s_{i0}, s_{i1}, ,...,s_{iT})$}
        {
            Record each intertemporal transition in the Markov chain (e.g., record transition from each state $s_t$ to state $s_{t+1}$ for $t=1, 2, ...T$: $M(s_t, s_{t+1}) \gets +1 \;\forall \;t$
        }
    }
    Run eigenvalue decomposition on the Markov Chain $M$\;
    By Perron-Frobenius theorem, the first eigenvector $V_0$, normalized, represents steady-state probabilities associated with matrix $M$\;
    \KwRet{$V_0$}\;
  }
\end{algorithm}

The resulting distribution of the states and their probabilities match those of full Monte Carlo at a fraction of the computational cost. Note that since the Markov Chain incorporates intra-path transitions, the proposed process can run with a fraction of paths planned: $n<<N$.

\section{COMPUTATIONAL COMPLEXITY}

\shortcite{AndersenEtAl2006} and \shortcite{SunEtAl2020} showed that the steady state of a Markov transition probability matrix can be computed in nearly constant time. The result is at least in part due to fast matrix multiplication algorithms, which are themselves an active area of research in reinforcement learning (see \shortcite{FawziEtAl2022}). 

As a result, the Monte-Carlo simulation can be distilled in as few as $\mathcal{O}(N)$ steps. The resulting convergence rate of Monte-Carlo is $\mathcal{O}(1)$, independent of the number of steps due to the invariance of the first eigenvalue and the related eigenvector. 

\section{EXPERIMENTS}

\subsection{Standard Monte-Carlo}
In this section, we reproduce a standard Monte Carlo simulation and compare the speed of the proposed approach. Specifically, we represent the simulated data in a Markov matrix and then find the steady-state distribution using the Perron-Frobenius matrix. Finally, we compare the resulting distribution with a base Monte Carlo distribution obtained by simulating 1,000,000 paths. 

In a traditional Monte Carlo simulation, we seek to record the terminal result of each path and create as many paths as possible. In contrast, the proposed method studies the intermediate steps of the path itself. In the proposed model, we examine the entire data path and treat each step as an independent transition that we can subsequently use in the analysis. 

In the simulation, we construct a basic binomial tree and compare the performance of the traditional 1,000,000 path Monte Carlo model with the proposed eigenvalue-based model. 

Each of the 1,000,000 paths was constructed as follows:
\begin{itemize}
    \item Each path started with the base value of variable $S_0=1$
    \item A $21\times 21$ Markov Transition Probability matrix $\mathcal{T}$ was initialized. Each row corresponded to 0.1 increment in value. The row indexed by $i=11$ corresponded to $S=1$, $i=12$: $S=1.1$, etc.   
    \item In each subsequent step $t\in\{1,...9\}$, a random number $r_t\in[0,1]$ was drawn. Whenever $r_t=1$, the value of the variable $S$ increased by 10\% ($\Delta s_u =1.1$): 
    \begin{equation}
        S_t|_{r_t=1} = S_{t-1} \Delta s_u
    \end{equation} 
    Similarly, whenever $r_t=0$, the value of the variable $S$ decreased by 10\% ($\Delta s_d = 0.9$): 
    \begin{equation}
        S_t|_{r_t=0} = S_{t-1} \Delta s_d
    \end{equation}

    \item Along each path, all transitions on the path were added to the Markov Transition Probability matrix $\mathcal{T}$. For example, if $S_{t-1}=1$ and $S_t=1.1$, then $\mathcal{T}_{11,12}$ was incremented by 1.   
    \item At the end of each path, a copy of the matrix $\mathcal{T}$ was created and normalized. The steady-state probabilities were estimated as the first eigenvector, normalized. The resulting steady-state distribution corresponding to the path $n\in 1,000,000$ was recorded.
    \item When $n=1,000,000$ paths were simulated, the average steady-state "true Monte Carlo" distribution was calculated.
    \item For each value of $n\in 1,000,000$, the Wasserstein Distance was computed between the eigen-distribution ($d_e$) corresponding to the first $n$ paths and the true Monte-Carlo distribution ($d_{MC}$). The Wasserstein distance was calculated as:
    \begin{equation}
        W_p = E[(d_e - d_{MC})^2]^{1/2}
    \end{equation}
\end{itemize}

    Figures \ref{fig:raw-wasserstein-binomial} and \ref{fig:stylized-wasserstein-binomial} summarize the results. As Figure \ref{fig:raw-wasserstein-binomial} shows, the Wasserstein distance between the Eigen Monte Carlo and the true Monte-Carlo is minimal even when the Eigen Monte Carlo is estimated only   on the first 10 paths! This confirms our theoretical results. 
    
    \begin{figure}
        \centering
        \includegraphics[width=0.5\linewidth]{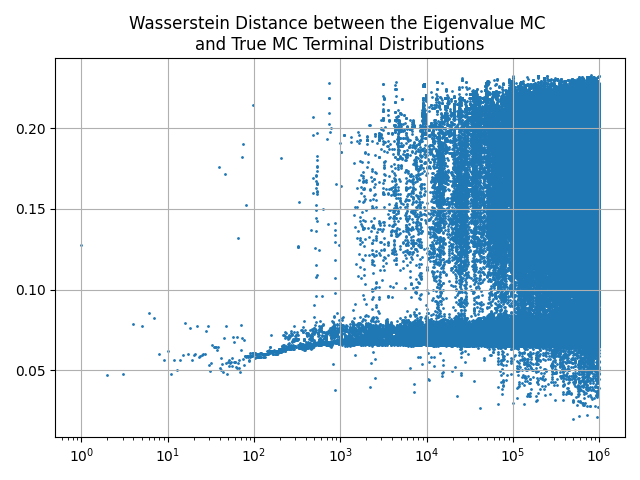}
        \caption{Raw Wasserstein Distance Between the 1,000,000 path Monte-Carlo and the Eigen Monte Carlo determined on the first $x$ paths.}
        \label{fig:raw-wasserstein-binomial}
    \end{figure}
    
    \begin{figure}
        \centering
        \includegraphics[width=0.5\linewidth]{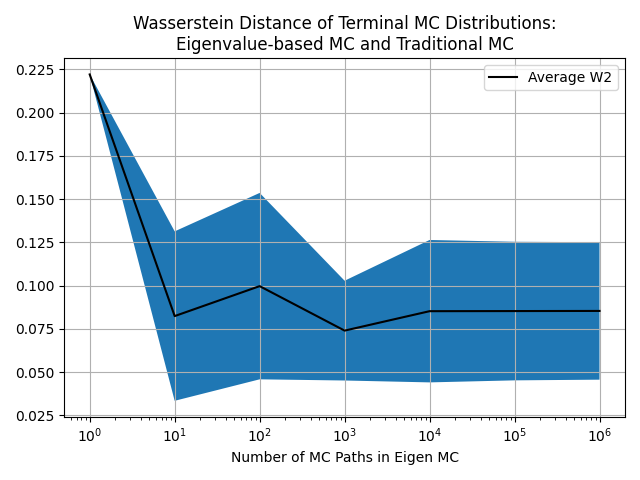}
        \caption{Mean Wasserstein Distance +/- 1 Standard Deviation between the 1,000,000 path Monte-Carlo and the Eigen Monte Carlo determined on the first $x$ paths.}
        \label{fig:stylized-wasserstein-binomial}
    \end{figure}

The simulation performed generalizes to Brownian motion and other Monte Carlo-based techniques. Given recent advances in calculating the first singular vector (see, for example, \shortcite{AndersenEtAl2006} and \shortcite{SunEtAl2020}), the proposed Eigen Monte Carlo approach can run in as little as $O(1+\epsilon)$ computational time. This is a significant improvement over existing methods.

\subsection{Diffusion, Step=by-Step}\label{diffusion-simulation}

This section compares the traditional and proposed Monte Carlo approaches in options pricing. We seek to estimate a European call option on a hypothetical stock with a mean drift $\mu=0.002$, volatility $\sigma=0.01$, starting price $S_0=100$ and option strike price of $110$. To keep the simulation tractable and easy to replicate, we simulate $N=300$ paths over a $T=30$ time horizon.

We simulate the diffusion process $dS/S = \mu dt + \sigma dW_t$, where $W_t$ is a standard Brownian motion. The resulting lognormal process translates to 
\begin{equation}
S_t = S_{t-1}exp\left((\mu + \sigma^2/2) dt + \sigma dW\right)  
\end{equation}

The continuous-value setting of traditional diffusion is a challenge to the MCMC approach, because MCMC enforces discretization of the states. In our experiment, to maintain tractability, we split all possible diffusion values into just 40 states, confining the possible price levels to whole values only, in the range $S=[80,120]$. If the simulation breached either of the boundary values, the algorithm bucketed the values into the boundary state, another limitation of the experiment. Thus, a simulated price of $80$ would be recorded as $80$, but so would the price of $79$ and any lower price. 

To compare the evolution of $S$ under the traditional and proposed MCMC approaches, we construct two distribution:
\begin{enumerate}
    \item A classic MC distribution of $N$ full path transitions from state $S_0$ to $S_T$.
    \item The "Eigen Monte Carlo" distribution constructed using the approach proposed in this paper. To develop this distribution, we first construct a Markov Chain that incorporates all intermediate transitions from each state $S_i$ to $S_{i+1}$, where $i<T$. In this example, the Markov Chain contained just 40 rows and 40 columns. Next, we normalize each row and discard the rows where all the elements were zeros to simplify the estimation process. Finally, we performed the SVD, obtained the first singular vector, and normalized it to find the steady-state distribution. 
\end{enumerate}

\begin{figure}
    \centering
        \centering
        \includegraphics[width=0.45\linewidth]{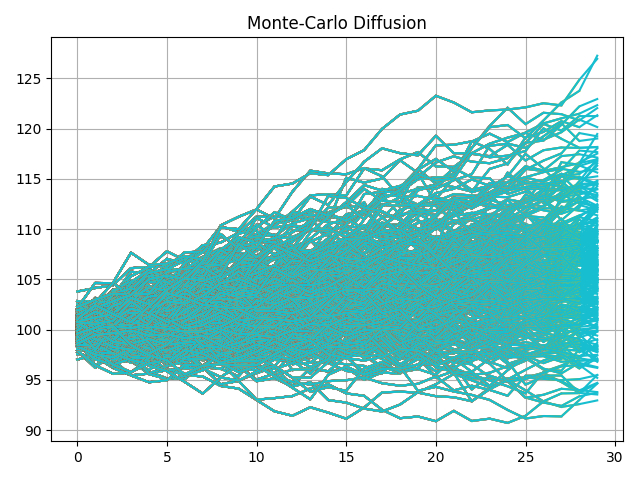}
        \caption{Diffusion}
        \label{fig:MCDiffusion}
\end{figure}

\begin{figure}
    \centering
    \begin{minipage}{0.49\textwidth}
        \centering
        \includegraphics[width=0.99\linewidth]{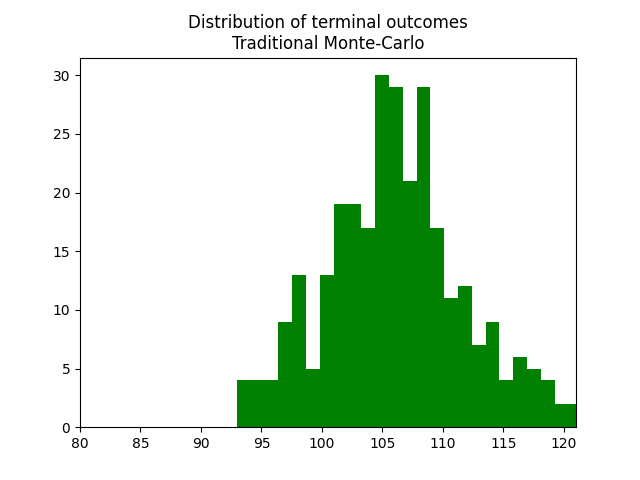}
        \caption{Traditional Diffusion: Terminal State.}
        \label{fig:TradDiffusion}
    \end{minipage}
    \begin{minipage}{0.49\textwidth}
        \centering
        \includegraphics[width=0.99\linewidth]{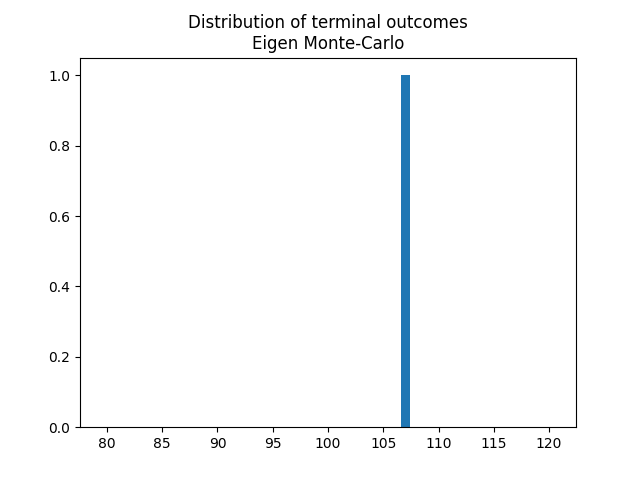}
        \caption{Proposed Eigen-MC: Terminal State.}
        \label{fig:EigenDiffusion}
    \end{minipage}
\end{figure}

Figure \ref{fig:MCDiffusion} shows the $N=300$ simulated diffusion paths of a toy example with mean drift $\mu=0.002$, volatility $\sigma=0.01$, starting price $S_0=100$ and option strike price $110$. Figures \ref{fig:TradDiffusion} and \ref{fig:EigenDiffusion} show the terminal distributions of the simulated stock prices for the traditional diffusion Monte-Carlo and the proposed approach. The proposed Eigen Monte Carlo delivers a small-variance estimate.

However, the proposed approach did not always converge in the diffusion setting. The lack of convergence may be driven by excessive discretization. Section \ref{simulation-parameters} considers some aspects of the approach performance driven by the model parameterization.

\section{Discussion of Results}\label{simulation-parameters}

The proposed methodology increases the underlying sample space of the Monte-Carlo simulation, without increasing the computational complexity of the process. Markov Chains effectively document the flow of data among states. The steady-state analysis captures the end points of the information flow, whether the data is flowing through the intermediate states or going directly to a terminal state. By recording all intermediate steps in the simulation, we increase the sample size and deliver more robust inferences with smaller variance. 

Much further work is needed to determine the conditions and parameters that determine the simulation outcome:
\begin{itemize}
    \item The impact of the number of paths $N$ and the number of time steps $T$  required to derive stable inferences using the Eigen Monte Carlo methodology proposed in this paper.
    \item The effect of the discretization of the Markov matrix: how granular should the states be and what is the impact of approximating the state values in the Markov chain?
    \item The parameters of the simulation itself: for example, the relationship between the drift and the variance. 
\end{itemize}

\section{CONCLUSION}

This paper develops and tests a new approach to the classic Monte Carlo simulation. Recording all intra-path transitions in a Markov Transition matrix and then using Eigenvalue decomposition and related Perron-Frobenius theorem, we are able to compute steady-state Monte Carlo inferences with 1) a much smaller variance and 2) in just a few sample paths, down from as many as 1,000,000 iterations. The Eigenvalue-based Monte Carlo can be computed in $O(1)$ computational time. 

\footnotesize

\bibliographystyle{wsc}
\bibliography{MonteCarlo,RL,RandomMatrixTheory}

\section*{AUTHOR BIOGRAPHIES}

\noindent {\bf \MakeUppercase{Irene Aldridge}} is a Visiting Professor at Cornell University, ORIE, Financial Engineering. She is a recognized researcher in the applications of Data Science to Finance. Her research interests span advances in financial technology and machine learning/AI, including blockchain-related optimization. Irene is a founder of several Fintech companies, most recently AbleMarkets and AbleBlox, where she served as CEO. In addition to startups, Irene also held several executive roles in banking. She is the author of several books on financial technology, including "High-Frequency Trading" (Wiley, 2013), "Real-Time Risk" (with Steven Krawciw, Wiley, 2017) and "Big Data Science in Finance" (with Marco Avellaneda, 2021).  Her email address is \email{irene.aldridge@gmail.com} and her website is \url{https://irenealdridge.com/}.\\

\end{document}